\newcommand{\commentout}[1]{}
\newtheorem{theorem}{Theorem}
\newtheorem{definition}{Definition}
\newtheorem{axiom}{Axiom}
\newtheorem{reminder}{Reminder}
\title{An Axiomatic Approach to Routing}
\author{Omer Lev
\institute{Hebrew University and\\Microsoft Research\\ Israel}
\email{omerl@cs.huji.ac.il}
\and
Moshe Tennenholtz
\institute{Technion\\ \\
Haifa, Israel}
\email{moshet@ie.technion.ac.il}
\and
Aviv Zohar
\institute{Hebrew University and\\Microsoft Research\\ Israel}
\email{avivz@cs.huji.ac.il}
}
\begin{document}
\maketitle

\begin{abstract}
Information delivery in a network of agents is a key issue for large, complex systems that need to do so in a predictable, efficient manner. The delivery of information in such multi-agent systems is typically implemented through routing protocols that determine how information flows through the network. Different routing protocols exist each with its own benefits, but it is generally unclear which properties can be successfully combined within a given algorithm. We approach this problem from the axiomatic point of view, i.e., we try to establish what are the properties we would seek to see in such a system, and examine the different properties which uniquely define common routing algorithms used today.

We examine several desirable properties, such as robustness, which ensures adding nodes and edges does not change the routing in a radical, unpredictable ways; and properties that depend on the operating environment, such as an ``economic model'', where nodes choose their paths based on the cost they are charged to pass information to the next node. We proceed to fully characterize minimal spanning tree, shortest path, and weakest link routing algorithms, showing a tight set of axioms for each.
\end{abstract}

\section {Introduction}
 
The proper way to distribute power, disseminate information, or establish hierarchies in organizations is an issue encountered whenever there is a large enough network of agents that needs to interact in an orderly manner. For example, when trying to establish efficient lines of communications between agents which all need to reach a central hub, there are various properties we may desire in our system. We might want the system to be able to handle small changes in connections without causing disruptions throughout the network; we may want it to be flexible when we change its parameters so that various routing options are possible, and more. Indeed, the search for the right communication structure has played a role in early work on the foundations of the area of multi-agent systems \cite{Fox81,Mal86,DLC87}, based on classical work in organization theory \cite{Gal73,MS58}.

More concretely, examining networking, one of the most important aspects of the design of a communication network is the way it routes information through its physical links. Routing protocols, such as those used in packet switching networks, circuit switching, or ad-hoc networks are designed with many goals in mind. They must adapt to changing network conditions, withstand failures, and operate in a distributed fashion while constructing a ``good'' routing scheme. Nodes in the network are, in fact, autonomous agents that can control the flow of information through them and can choose to forward it according to their own considerations. 
Agents may be controlled by different economic entities (such as in the internet, where different internet service providers control some of the routers), and may route according to complex preferences that are derived from economic relations~\cite{GR01,LSZ08}. Even in the cooperative local-network setting where all routers are controlled by a single network operator, different considerations such as bandwidth utilization, latency, and the risks of link failures come into play.

The multitude of previous treatments of the problem suggest a myriad of routing protocols, each with their own benefits and shortcomings. In contrast, this work examines the routing problem through the lens of the \emph{axiomatic approach}, which seeks to formulate different elementary properties that are desirable in this context. One approach to an axiomatic treatment, which we take in this work, is that of characterization: a set of elementary properties is shown to uniquely determine some routing algorithm, and hence the routing outcome on any specific graph. From the designer's perspective, such a result implies a great deal -- any additional property that is not already achieved by the protocol cannot be added to it without giving up on another basic property. The approach thus provably bounds the design space of algorithms and makes explicit the choices made when selecting one over the other.

As we are not aware of any previous axiomatic treatment of routing, we focus our attention on a domain that most closely resembles the internet as it is built today, and focus our efforts within this domain on what one may consider classic, or natural routing schemes. In particular, we assume that routing choices are independent of the congestion on links (such is the case in the internet, where routing protocols such as BGP first establish paths, and congestion control protocols such as the one embedded into TCP manage the load on each flow's path and ensures that rates are throttled to match the bottleneck of the flow). Furthermore, as with internet routing where routers decide on the next hop of each packet using a routing table that maps its destination to the next hop, routing choices made to different destinations are done independently. Finally, packets addressed to the same destination are not split between different paths, and are routed in the same manner regardless of their source. These choices, which greatly restrict the power of any routing algorithm may seem arbitrary, but are in fact derived from real-world design considerations. For example, the need to quickly forward packets towards their destination at each router mandated that most routing be done in specialized hardware. No complex computation is performed (only a lookup into a routing table) and no deep inspection of the packet is performed. Keeping routing simple has made it fast and robust.

More advanced routing schemes that have been proposed in the literature may split traffic, allow routing choices to depend on the source of the packet or its previous hops, or may even change the routes in response to link congestion. These are notoriously difficult to coordinate and to implement. We leave treatment of these more advanced schemes to future work.

Our set of axioms or ``desirable properties'' are also motivated by similar considerations. For example, one of the fundamental features we desire in our algorithms is one of {\bf robustness}, which is the ability of a system to endure changes in the network without creating disruption in parts of the network that have not undergone changes.

A different feature, which might be desirable only in certain cases, is {\bf ``first hop''}, which is particularly relevant for diffuse networks with independent nodes. It means, broadly, that network nodes care only about their immediate surroundings, or the ``next step'' in the network data transfer. Such a property might be relevant when nodes pursue an ``economic model'', paying for transferring information, and hence only caring about the cost they need to pay to move their information to the next node, and following that, they have no preference on the route the information should pass en route to its destination. Other properties, desirable only in some cases include an indifference between two parallel paths, as long as they change their weights by the same amount concurrently.

Ultimately, after devising our axioms we successfully fully characterized 3 natural routing algorithms:
\begin{itemize}
\item {\bf Minimum spanning tree:} A tree with the smallest overall weight is a result, among others, of the ``first hop'' axiom (the ``economic model'').
\item {\bf Shortest path:} A tree where each node has the shortest possible path to its destination is a result, among other axioms, of viewing as immaterial to the routing decision any parallel paths which change their weight by the same amount.
\item {\bf Weakest link tree:} A tree where each node takes the path with the maximal ``lightest'' weight available to it. This results from considering higher edge-weights as beneficial (e.g., representing bandwidth which one wishes to increase in contrast to delay that one wishes to decrease), and from considering designers that choose between parallel paths in a slightly different manner.
\end{itemize}

We proceed to review relevant previous research and then continue to define our model and expand on the axioms, which are motivated with a brief explanation and presented formally. Following that we show (and prove) our characterization of the minimal spanning tree, the shortest path tree, and the weakest link tree.
\section{Related Work}

In the past decade, as routers became more flexible, research on routing (particularly inter-domain) and its techniques has been rekindled and extended beyond the technical issues dealt with in the past. The harbinger for much of this research was \cite{GS05}, which was further expanded by several researchers (see updating report here: \url{http://www.cl.cam.ac.uk/~tgg22/metarouting/} ). However, this line of research, while introducing many interesting mathematical and theoretical concepts to the field of routing, has refrained from phrasing its models as requirements by users, to be filled by various routing algorithms.

The axiomatic approach, which does approach problems with this outlook, has been first introduced in CS contexts as extensions to the classical theory of choice \cite{Arr51}, and has been applied to ranking systems \cite{AT05b,AT10} and trust systems \cite{ABCFFKMT08}, as well as to other multi-agent setups such as multi-level marketing \cite{EKTZ11}.

In relation to networking, usage of the axiomatic approach has generally been concentrated in two main areas: applying to general graph theory (e.g., \cite{VP88}) or in more technical approaches to networks: papers such as \cite{KNGLYE04} which deal with particular wireless models and implementations, and, somewhat closer to our line of work, \cite{KKPB07}, whose basic axioms are basic enough to be covered through our models, while the routing related axioms involve various assumptions on how routers work (tables, etc.), which we refrain from approaching in our more abstract considerations.

Further work connecting networking and the axiomatic approach has focused on particular instances of problems:  \cite{HSE95} try to use the axiomatic approach to extract the costs of multicast routing and decide who is to pay them. Trust networks and social networks (e.g., recommendation systems) have been analysed many times using the axiomatic approach to understand their desirable features and better understand desirable algorithms in these cases \cite{SYHL06,ABCFFKMT08}. However, none of these papers deal with the basic routing mechanism by which messages and information arrive at each node.
\section{Setup}

Before introducing our axioms, we begin by setting up our routing model. It is, naturally, only a simplification of routing as it is done in large, complex networks such as the internet, but we believe it is robust enough to display many networking characteristics.

Our world will be a weighted graph $G(V,E,W)$ 
and a destination $d$, where $V$ is a set of nodes, $E$ is a set of edges, and $W$ is a function assigning weights to edges, and $d \in V$. A routing solution is a tree $T$ over that graph, as defined below (we do not concern ourselves with non-tree routing, as passing through the same node several times does not serve any purpose).

\begin{definition}
A routing function $f_{d}:\mathcal{G}\rightarrow \mathcal{T}$ is a function from connected weighted graph $G(V,E,W)\in \mathcal{G}$ in which $d\in V$, to a tree $T(V,E,W)\in \mathcal{T}$ such that $T\subseteq G$.
\end{definition}

We can look at the graph as one with directed edges if we consider each edge's direction to be the one pointing at the vertex from which there is a path to $d$ (without going through the same edge again).

We discuss 3 different routing options:
\begin{itemize}
\item \emph{Minumum spanning tree (MST)}: a tree connecting all nodes in the graph with the minimal weight, i.e., for every tree $T'\subseteq G$ that encompasses all of $G$'s nodes, $\sum_{e\in f_{d}(G)}W(e)\leq \sum_{e\in T'}W(e)$.

\item \emph{Shortest path}: each node is connected to $d$ using a shortest length path in the graph. For every node $v\in V$, let $(e_{1},\ldots,e_{s})$ be a path without cycles from $v$ to $d$ such that $e_{i}\in T$, and let $(e'_{1},\ldots, e'_{k})$ a different path from $v$ to $d$, then $\sum_{i=1}^{s}W(e_{i})\leq \sum_{j=1}^{k}W(e'_{j})$.

\item \emph{Weakest link}: looking at each potential path from each node to $d$, we give each path the value of its smallest valued edge. The routing tree will contain, for each node the path to $d$ with the maximal value. So for every node $v\in V$, let $(e_{1},\ldots,e_{s})$ be a path without cycles from $v$ to $d$ such that $e_{i}\in T$, and let $(e'_{1},\ldots, e'_{k})$ a different path from $v$ to $d$, then $\min_{1\leq i\leq s}W(e_{i})\geq \min_{1\leq j\leq k}W(e'_{j})$.
\end{itemize}

Notice that while for the minimal spanning tree and shortest path routing options weights are interpreted as costs (e.g. payments, delays), so these algorithms seek to minimize them, the weakest link views weights as measure for capability such as bandwidth, so seeks to maximize the weight.

\section{Axioms}
Having introduced our framework, we introduce our axioms, which are, basically, desirable properties of the function $f_{d}$ (in the axioms below we use $f$, as these are properties which do not depend on a specific $d$ destination).

\emph{Robustness} indicates the routing being quite unsusceptible to changes -- only if a path in the routing is destroyed, will it require any change. As indicated in Figure~\ref{robustEx}, the path from node $a$ changes, but not from node $b$.
\begin{axiom}[Robustness]
$f$ is \emph{robust} if removing an edge $e \in E$ from $G(V,E,W)$, yielding $G'$, then for every vertex $v\in V$: if the cycle-less path from $v$ to $d$ in $f_{d}(G)$ did not contain $e$, then this is still the selected path according to $f_{d}(G')$ (see example Figure~\ref{robustEx}).
\end{axiom}

\begin{figure}
\begin{center}
\includegraphics[scale=0.6]{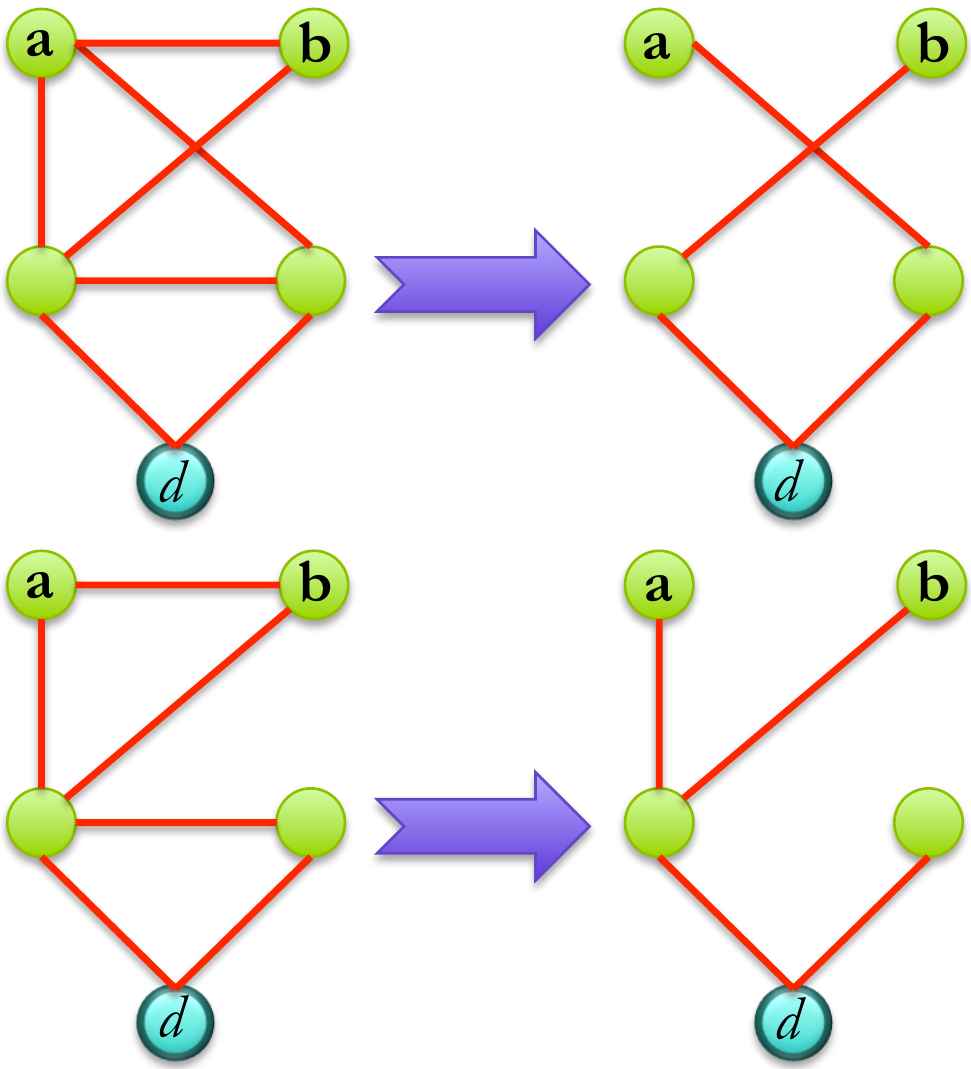}
 \end{center}
\caption {An edge is removed, but only $a$, whose path used that edge changes its path (the left side is the graph, the right side is the routing algorithm's output)}\label{robustEx}
\end{figure}

%

The following axioms deal with global changes to the graph weights, additive or multiplicative:

\begin{axiom}[Scale Invariance]
$f$ is \emph{scale invariant} if for a graph $G(V,E,W)$, for any positive scalar $\alpha\in\mathbb{R}_{+}$, defining $G'(V,E,\alpha W)$, for every $d\in V$, $f_{d}(G)=f_{d}(G')$.
\end{axiom}

\begin{axiom}[Shift Invariance]
$f$ is \emph{shift-invariant} if for a graph $G(V,E,W)$, for any $\alpha\in\mathbb{R}$, defining $G'(V,E,\alpha +W)$, for every $d\in V$, $f_{d}(G)=f_{d}(G')$.
\end{axiom}

The monotonicity axiom below seeks to establish that if an edge does not have to be in every tree, if its weight increases enough, it will not be a part of the routing tree:

\begin{axiom}[Monotonicity]
$f$ is \emph{monotone} if for a graph $G(V,E,W)$ and $d\in V$, for $e'\in E$, if $e'\notin f_{d}(G)$, then for every $G'(V,E,W')$, there is a value $M_{W'}$ such that for $W''$ such that $W''(e)=W'(e)$ for all $e\in E\setminus\{e'\}$ and $W''(e')\geq M_{W'}$, $e'\notin f_{d}(G''(V,E,W''))$.
Similarly, we can define the opposite direction, an edge in $f_{d}(G)$ will not be in the routing tree if it has a small enough value; we will refer to it as \emph{inverse monotonicity}.
\end{axiom}
While the phrasing of the following axiom is somewhat technical, the \emph{first hop} axiom below simply means that if a vertex has several potential edges to connect to a path to $d$, the routing only depends on the weights of the edges connecting it to these potential paths, and unrelated to weights of other edges in the graph.

\begin{axiom}[First Hop]
Let $G(V,E,W)$ be a weighted graph and let $v,d\in V$ and $d\neq v$. Suppose $C=\{c_{1},\ldots,c_{s}\}$ are the vertices such that $(v,c_{i})\in E$ and there is a path from $c_{i}$ to $d$ in $f_{d}(G)$ which does not pass through $v$. W.l.o.g., let $(v,c_{1})$ be the first step in the path from $v$ to $d$ in $f_{d}(G)$. We say that $f$ satisfies \emph{first hop} if for any $W'$ such that $W'(v,c_{i})=W(v,c_{i})$ and if for all $c_{i}\in C$ $f_{d}(G'(V,E,W'))$ contains paths to $d$ from $c_{i}$ that do not pass through $v$, and there is no $c'\notin C$ such that $(v,c')\in E$ and there is a path from $c'$ to $d$ in $f_{d}(G')$, then the cycle-less path from $v$ to $d$ in $f_{d}(G'(V,E,W'))$ starts with $(v,c_{1})$.
\end{axiom}

The rational for the \emph{first hop} axiom is to capture a common economic model, in which edge weights indicate the cost of passing information. In distributed networks, such as the internet, each agent only minds the amount it needs to pay to transfer its data to the next node, not caring about the path the data will take from there.

\emph{Path cardinal/ordinal invariance} intends to see the planner's considerations when multiple paths exist. As there might be many potential behaviours, we only limit ourselves to examining the narrow case of what the planner considers important when there is only one cycle in the graph (i.e., the axiom does not strongly enforce a general behaviour on the planner). Cardinal invariance deals with adding the same weight to potential paths, and how it does not effect the routing. Ordinal invariance similarly does not change the routing if all that has changed are the weights of the competing paths, as long as edges in each path maintain their relative position.

\begin{axiom}[Path Cardinal Invariance]
Let $G(V,E,W)$ be a graph which contains a single cycle, $d\in V$, and let $d\neq v\in V$ be a part of this cycle. Hence there are two alternative paths from $v$ to $d$ -- $p_{1}\subset E$ and $p_{2}\subset E$ (one of them is actually a part of $f_{d}(G)$). $f$ is \emph{path cardinal invariant} if it treats those paths as such: Choosing an edge $e'\in p_{1}$ and $e''\in p_{2}$, for any $\alpha\in \mathbb{R}$, we define $W'$ as $W(e)=W'(e)$ for $e\in E\setminus\{e' \cup e''\}$ and $W'(e')=W(e')+\alpha$ and $W'(e'')=W(e'')+\alpha$, the path from $v$ to $d$ will not change in $f_{d}(G(V,E,W'))$ (see example Figure~\ref{cardInv}).\\
\end{axiom}
\begin{figure}
\begin{center}
\includegraphics[scale=0.6]{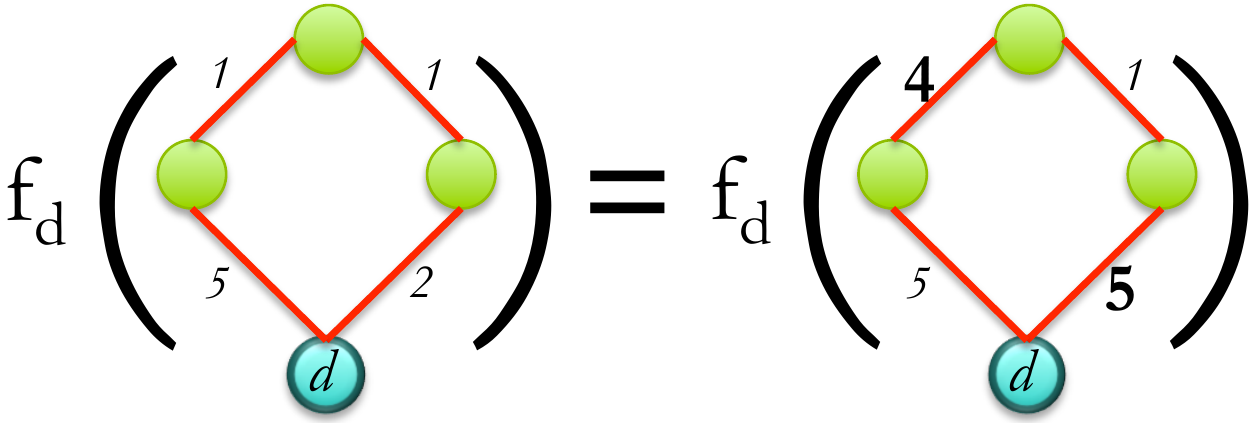}
 \end{center}
\caption {Selected path does not change when each path from the top node is added 2.}\label{cardInv}
\end{figure}

\begin{axiom}[Path Ordinal Invariance]
Let $G(V,E,W)$ be a graph which contains a single cycle, $d\in V$, and let $d\neq v\in V$ be a part of this cycle. Hence there are two alternative paths from $v$ to $d$ -- $p_{1}\subset E$ and $p_{2}\subset E$ (one of them is actually a part of $f_{d}(G)$). $f$ is \emph{path ordinal invariant} if it treats those paths as such: Taking an edge $e'\in p_{i}$ ($i\in \{1,2\}$) that is not maximal or minimal in $p_{1}\cup p_{2}$, we define $W'$ as $W(e)=W'(e)$ for $e\in E\setminus \{e'\}$ and allow W'(e') to be any value it chooses as long as for every $e''\in p_{i}$ if $W(e')\geq W(e'')$ then $W'(e')\geq W'(e'')=W(e'')$, and the path from $v$ to $d$ will not change in $f_{d}(G(V,E,W'))$ (see example Figure~\ref{ordInv}).\\
\end{axiom}
\begin{figure}
\begin{center}
\includegraphics[scale=0.6]{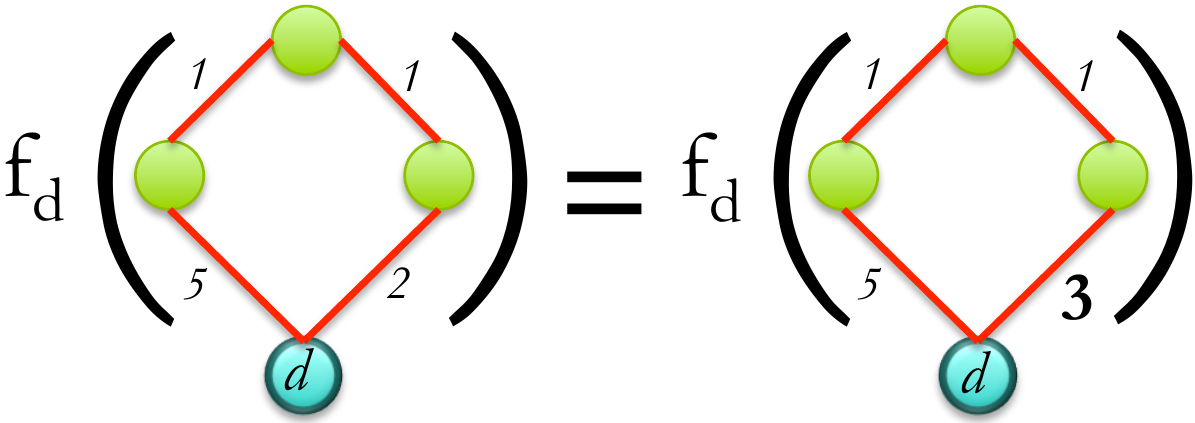}
 \end{center}
\caption {Selected path does not change when the bottom right edge is slightly increased.}\label{ordInv}
\end{figure}
\section{Minimal Spanning Tree}

\begin{theorem}\label{thm:MST}
A robust, scale invariant, shift invariant, monotone, first-hop (axioms 1-5) routing function $f$, for any graph $G(V,E,W)$ and $d\in V$, $f_{d}(G)$ will always be a minimal spanning tree of $G$.
\end{theorem}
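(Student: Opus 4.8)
\section*{Proof proposal for Theorem \ref{thm:MST}}

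The plan is to show that $f_d(G)$ satisfies the cut/cycle-property characterization of the minimum spanning tree, deriving it from the axioms in three stages: a local selection rule from \emph{first hop}, a reduction of the global problem to a single fundamental cycle via \emph{robustness}, and a squeezing argument from \emph{monotonicity} together with \emph{scale} and \emph{shift invariance} to force the genuinely cheapest choice. Throughout I will use that $f_d(G)$ is by definition a spanning tree, and that scale and shift invariance make $f_d$ depend only on the affine-equivalence class of $W$, so that comparisons can be reduced to the relative order of the relevant weights.

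First I would extract the local rule. Fix $v\neq d$ and let $C$ be its candidate set as in the \emph{first hop} axiom. The axiom says that the first edge of $v$'s path is a function only of the weights $\{W(v,c_i)\}_{c_i\in C}$ and not of the rest of the graph. I would then argue that this function is ``choose a minimum-weight candidate edge'': reducing by affine invariance to a comparison of two candidate weights, I would use \emph{monotonicity} to rule out selecting the heavier of two available candidate edges, since an edge that is ever left out stays out once its weight is raised high enough. The output of this stage is that, in $f_d(G)$, every vertex attaches to the complement of its subtree through a lightest edge \emph{among those incident to it} that cross the cut separating its subtree from $d$.

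Next I would pass from this per-vertex statement to the cut property for each tree edge. For a tree edge $e$ inducing the cut $(S,\bar S)$ (with $S$ the subtree away from $d$), the local rule pins down the attaching edge of the root of $S$, but a cheaper crossing edge could still emanate from a deeper vertex of $S$; the task is to propagate the local choices down $S$ and show no such edge exists, i.e.\ that $e$ is a minimum-weight edge across its cut. To handle general graphs I would invoke \emph{robustness}: deleting a non-tree edge changes no path (it lies on none), so $f_d$ is unchanged, and repeatedly deleting all non-tree edges except one reduces to a single-cycle graph $T+e$, on which it suffices to verify that the removed edge is a maximum-weight edge of its fundamental cycle. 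Combining this reduction over all non-tree edges yields the cycle property for $f_d(G)$, and hence that it is an MST.

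I expect the main obstacle to be exactly the step from the local \emph{first hop} rule to global optimality. First hop, monotonicity and the invariances are each only local or asymptotic: they constrain a vertex's choice among its own candidate edges and guarantee that sufficiently heavy edges are dropped, but they do not by themselves exclude spanning trees that are locally consistent yet suboptimal---on a cycle a heavy edge can become a \emph{forced} parent edge precisely because the lighter alternative at that vertex leads into the vertex's own subtree. Breaking these spurious local fixed points is the heart of the proof: I would supply \emph{monotonicity} with the forced configurations obtained by driving one cycle edge to an extreme (where the fixed point becomes unique), and then use scale and shift invariance to transport the resulting exclusion back to the actual weights, squeezing the exchange threshold down to the true maximum of the cycle. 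Getting this squeeze to land exactly at the MST exchange boundary, rather than merely for asymptotically heavy edges, is the delicate point on which the argument turns.
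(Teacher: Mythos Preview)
Your core mechanism matches the paper's: use robustness to strip the graph down to the single-cycle graph $T\cup\{e\}$, invoke monotonicity to flip which cycle edge the tree omits, apply scale and shift invariance to bring the two competing edge weights back to their original values, and read off a contradiction from first hop. That engine is exactly what the paper runs.

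The substantive difference is the outer organization, and it bears directly on the gap you yourself flag. You aim to verify the cycle property (every non-tree edge is maximum on its fundamental cycle), whereas the paper runs an induction along Kruskal order, proving that the $k$ lightest non-cycle-forming edges of $G$ all lie in $f_d(G)$. In $T\cup\{e\}$, first hop is non-trivial only at the endpoints of $e$: every interior cycle vertex has a single candidate (its tree parent), so the axiom says nothing there. Your local rule therefore compares $W(e)$ only with the tree edges \emph{adjacent} to $e$, never with a heavier tree edge buried deeper on the cycle, and the ``squeeze'' you describe cannot reach those interior edges either---scale and shift move all weights together, and first hop still only sees the two edges incident at $u$. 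The paper's Kruskal ordering is precisely the device that renders the adjacent comparison sufficient: in the base case $e$ is globally lightest so the neighbouring tree edge is automatically heavier, and in the inductive step the hypothesis (re-applied to the reweighted graph produced by monotonicity) controls what sits on the cycle. Without an analogous ordering device your plan does not yet close the local-to-global step.

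A smaller correction: in your stage~1 you say monotonicity ``rules out selecting the heavier of two candidate edges, since an edge that is ever left out stays out once its weight is raised.'' That reads the axiom in the wrong direction---in your bad configuration the edge left out is the \emph{lighter} one, and raising it yields nothing. The paper's move (which your stage~3 sketch comes closer to) is to raise the weight of the currently \emph{included} parent edge $e'$ until monotonicity drops it, then use scale and shift to restore $W(e)$ and $W(e')$ before invoking first hop. That is where the argument actually lives; stage~1 as written does no independent work.
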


\begin{reminder}
As our minimal spanning tree proof relies on the Kruskal algorithm, we will briefly describe it:
\begin{enumerate}
\item Order edges according to weights
\item Define a set $S$, initialized to the empty set.
\item Going over edges from lightest to heaviest, if the set $S\cup\{e\}$ has no cycles, $S=S\cup \{e\}$.
\end{enumerate}
\end{reminder}

\begin{proof}[Proof of Theorem~\ref{thm:MST}]
We shall prove the theorem using complete induction on the number of non-cycle lightest edges in the tree $f_{d}(G)$. Hence, we shall begin by proving that the lightest edge in the graph $G$ is in the routing tree $T=f_{d}(G)$. Assuming we are mistaken, let us consider the lightest edge in $G$ -- $e=(u,v)\in E$ -- and assume $e\notin T$. We create $G'(V,E',W)=f_{d}(G)\cup \{e\}$, and thanks to the robustness axiom, we know $f_{d}(G)=f_{d}(G')$.

If $v$'s path to $d$ in $f_{d}(G')$ goes through $u$, we shall switch the nodes' names, so that $v$'s path to $d$ does not pass through $u$. As $e$ is not in $f_{d}(G')$, there is an edge $e'=(u,s)$ that is the first step from $u$ towards $d$. We now define $x=W(e)$ and $y=W(e')$, and due to our minimality assumption, we know $x<y$.

Using the monotonicity axiom, we change graph weights to $W'$ that is identical to $W$ except that $e'$ weight is large enough so that we create a tree $T'$ in which there is a path from $s$ to $d$ that does not pass through $u$ (e.g., the same path that is in $f_{d}(G')$), and $v$ passes through $u$ towards $d$ (i.e., $e\in T'$).
 We define $y'=W''(e')$.

Using scale invariance we now multiply all edges by $\frac{y-x}{y'-x}$, and using shift invariance, we add to all edges $y-\frac{y-x}{y'-x}y'$. This means the weight of edge $e$ is now
$$
x\frac{y-x}{y'-x}+y-\frac{y-x}{y'-x}y'=(x-y')\frac{y-x}{y'-x'}+y=x
$$
While the weight of edge $e'$ is now
$$
y'\frac{y-x}{y'-x}+y-\frac{y-x}{y'-x}y'=y
$$

However, the routing tree contains $e$ and not $e'$, and a path from both $v$ and $s$ to $d$, contradicting the ``first hop'' axiom, which should have caused $e'$ to be chosen over $e$, as the edge weights for $e$ and $e'$ have not changed.

We now turn to the induction step -- we assume all bottom weighted $k-1$ edges that do not create a cycle are included in the tree $T=f_{d}(G)$, and we now seek to include the $k$-lightest edge that does not create a cycle. We pursue a similar path as we did as previously, and we shall mark the edge as $e=(u,v)$, and assume it is not included in $T=f_{d}(G)$, and instead $e'=(u,s)$ is included, and there is a path to $d$ from $v$ and $s$. Again, we create $G'(V,E',W)=f_{d}(G)\cup \{e\}$, and thanks to the robustness axiom, we know $f_{d}(G)=f_{d}(G')$. Using monotonicity we create weights $W'$ that just increase $e'$ weight, so that $G''=(V,E',W')$ has $T'=f_{d}(G'')$ which include the same bottom $k$ which do not create cycles (from the induction hypothesis), and $u$ reaches $d$ via the edge $e$. Recall that we know the bottom $k-1$ edges will definitely be in $f_{d}(G'')$, and we wish to ensure that there will still be a path from $v$ to $d$ and from $s$ to $d$. The same arguments used in the initial step of the induction ensure that, as well as returning the weights of $e$ and $e'$ to their values in $G$, while routing $u$ through $e$ and not $e'$ in the routing tree, reaching a contradiction with our initial assumption due to the ``first hop'' axiom.

What is left is to show MST indeed follows our axioms:
\begin{description}
\item[Robustness (axiom 1)] Trivial thanks to the Kruskal algorithm -- if the removed edge ($e'$) was not in the routing tree, it means it was not selected in the first place, and hence the same routing tree will be chosen. If it was, then any edge added after its removal ($e''$) closed a cycle with it, and hence, if affecting the edges in any path that did not include $e'$, it means $e''$ closes a cycle with them, hence $e'$ would have closed a cycle as well.
\item [Scale invariance (axiom 2)]  Multiplying all edges by a fixed amount does not change their order in relation to others, hence Kruskal will choose the same routing tree.
\item [Shift invariance (axiom 3)] Adding a fixed amount to all edges does not change their order in relation to others, hence Kruskal will choose the same routing tree.
\item [Monotonicity (axiom 4)] Giving an edge the maximal possible edge value ensures it will only be selected if no other edge can replace it -- and if there exists a tree without some edge, we know it will be chosen before.
\item [``First hop'' (axiom 5)] Kruskal ensures that if there are the same possible options of connecting a node to the tree, only the lightest edge will be chosen.

\end{description}
\end{proof}
\section{Shortest Path}

\begin{theorem}
A robust, scale invariant, monotone, and path cardinal invariant (axioms 1-2, 4, 6) routing function $f$, for any graph $G(V,E,W)$ and $d\in V$, $f_{d}(G)$ will always be a shortest path graph to $d$ of $G$.
\end{theorem}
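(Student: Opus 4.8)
The plan is to follow the same template as the minimal spanning tree proof: isolate the relevant sub-structure with \emph{robustness}, normalise weights with the invariance axioms, and derive the comparison from the remaining axiom (here \emph{path cardinal invariance} plays the role that first hop played before). Concretely, I would prove the equivalent statement ``for every vertex $v$ and every simple $v$--$d$ path $Q$, the weight of $v$'s path in $f_d(G)$ is at most $W(Q)$'', which is exactly the shortest-path property. The argument is a strong induction on the number of edges of the competing path $Q$, with $P$ denoting $v$'s path in $f_d(G)$.

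Given $v$, $P$ and a rival $Q$, I would first delete, one edge at a time, every edge outside $P\cup Q$; since none of these lie on $v$'s path, robustness keeps $v$'s path equal to $P$, so it suffices to argue in $G'=(V,P\cup Q,W)$. If $P$ and $Q$ leave $v$ by the same edge, I pass to the second vertex and invoke the induction hypothesis on a path with one fewer edge. Otherwise let $w$ be the first vertex after $v$ lying on both $P$ and $Q$; then $P[v,w]$ and $Q[v,w]$ are internally disjoint and, together with the common tail $R=P[w,d]$ (the $f$-path of $w$), form a single-cycle ``lollipop''. Deleting the private edges of $Q[w,d]$ (again off $v$'s path, so robustness applies) reduces $G'$ to this lollipop without changing $v$'s route.

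The heart of the proof is a lemma about the lollipop: $f$ routes $v$ through the lighter of the two arcs $P[v,w]$ and $Q[v,w]$. Since the lollipop has a single cycle, path cardinal invariance applies; adding $\alpha$ to one private edge of each arc changes both path totals equally, hence preserves $\delta:=W(P[v,w])-W(Q[v,w])$, and by a transportation-style redistribution every weighting of the two arcs with the same $\delta$ is reachable without altering $v$'s route. Thus $v$'s choice is a function $h(\delta)$ of $\delta$ alone. Scale invariance forces $h$ to be constant on $\{\delta<0\}$ and on $\{\delta>0\}$ (positive scaling preserves both the sign of $\delta$ and the route), while monotonicity, applied to whichever arc-edge is currently out of the tree, shows $h$ equals $2$ for all large $\delta$ and equals $1$ for all very negative $\delta$; combining these gives $h(\delta)=1$ for $\delta<0$ and $h(\delta)=2$ for $\delta>0$ — that is, $f$ takes the shorter arc. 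Feeding this back, the lemma yields $W(P[v,w])\le W(Q[v,w])$, and the induction hypothesis applied to $w$ with rival $Q[w,d]$ (strictly fewer edges) yields $W(P[w,d])\le W(Q[w,d])$; adding them gives $W(P)\le W(Q)$, with the base case $w=d$ being the lemma itself.

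The step I expect to be the main obstacle is pinning the threshold of $h$ exactly at $\delta=0$, i.e.\ excluding a systematically biased $f$ — the constant maps $h\equiv 1$ or $h\equiv 2$ — which forward monotonicity, scale invariance and cardinal invariance alone do \emph{not} rule out (a route that ignores weights satisfies all of them). I would dispose of these degenerate cases with \emph{inverse} monotonicity: if $v$ always used the $P[v,w]$-arc, a private edge of that arc would always be a tree edge, and driving its weight low enough forces it out of the tree, contradicting the assumption; the symmetric argument excludes the other constant. A secondary technical point is that the redistribution inside the lollipop may pass through negative weights, so I would either permit real-valued weights throughout or confine the redistribution to the positive range. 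Finally, to complete the characterisation I would verify the converse as in the MST case — that the shortest-path (Dijkstra) tree is robust, scale invariant, monotone and path cardinal invariant — which is routine, since only differences of path weights, and their signs under positive scaling, govern its choices.
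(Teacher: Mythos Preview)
Your overall architecture coincides with the paper's: strip the graph down to a single cycle via robustness, use path cardinal invariance to show the choice depends only on the difference $\delta$ of the two arc weights (the paper does this by ``sliding'' all weight onto the two first edges $(u,v)$ and $(u,s)$), and then combine scale invariance with monotonicity to force the lighter arc to win. Your lollipop lemma and the function $h(\delta)$ are a clean repackaging of exactly the computation the paper carries out with the explicit scaling by $\tfrac{y-x}{y'-x}$ followed by a cardinal shift.

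You are also right that the delicate point is pinning the threshold at $\delta=0$, i.e.\ excluding a weight-insensitive choice. But your proposed remedy is wrong. \emph{Inverse} monotonicity, as defined in the paper (an edge that is in the tree can be forced out by making its weight \emph{small} enough), is \emph{not} satisfied by shortest-path routing---lowering an edge's weight only makes the paths through it shorter, so the edge stays in. If you add inverse monotonicity to the axiom list you lose the converse direction of the characterisation, and if you merely invoke it in the forward direction you are using a hypothesis the theorem does not give you.

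The paper resolves this step with \emph{forward} monotonicity applied to the tree edge $e'=(u,s)$: since $e'$ lies on the unique cycle of the reduced graph it is not forced into every spanning tree, so (reading the axiom via its informal statement, ``if an edge does not have to be in every tree, increasing its weight enough removes it'') one can raise $W(e')$ until the routing flips to use $e=(u,v)$ instead. After a second concentration, scale invariance plus a cardinal shift return the two edge weights to their original values $x$ and $y$ while the routing now goes through $e$, contradicting the starting configuration. In your language, forward monotonicity applied to a private edge of each arc already gives $h=2$ for some positive $\delta$ and $h=1$ for some negative $\delta$, and scale invariance then propagates these to all of $\{\delta>0\}$ and $\{\delta<0\}$---so no appeal to inverse monotonicity is needed, and making one breaks the theorem.
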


\begin{proof}
Suppose $T=f_{d}(G)$ is not a shortest path routing tree. Let $u$ be the closest node to $d$ that is not connected to $d$ with a shortest path. Hence, there is an edge $e=(u,v)$ which will make $u$'s path a shortest path one ($v$, being closer to $d$, is already connected to $d$ with a shortest path), but $e\notin T$, and instead $e'=(u,s)$ is included in $T$.
Using robustness, we create $G'(V,E',W)=T\cup e$. $G'$ contains two alternate paths from $u$ to $d$, and $f_{d}(G)=f_{d}(G')$.

Using path cardinal invariant, we ``move'' all the value of the edges on each path to it's ``source'', i.e., to $(u,v)$ or $(u,s)$ (we do this by adding to the weight of $(u,v)$ and $(u,s)$ the value of\linebreak $\sum_{e\in (p_{1}\cup p_{2})\setminus(p_{1}\cap p_{2})}W(e)-W((u,v))-W((u,s))$, and reduce from $W((u,s))$ the weight of all edges of the path from $u$ to $d$ through $(u,v)$ and vice versa). We shall refer to $W(e)=x$ and $W(e')=y$. We now use monotonicity to create a new tree, with $e$ but without $e'$, with the graph's weight now $W'$ (identical to $W$ except for increase in $e'$ weight). Once again, we transfer all value of the paths from $u$ to $d$ to $e$ and $e'$ respectively, with everything else being 0. Now, using monotonicity, we increase the weight of $e'$ above that of $e$, with the weight of $(u,v)$ being $x$ (its path weights have not changed) and $(u,s)$ being $y'$.

Finally, we multiply all edges by $\frac{y-x}{y'-x}$ (using scale invariance), and using path cardinal invariance, we add to $e$ and $e'$ the amount $y-\frac{y-x}{y'-x}y'$. The weight of $e$ is now:
$$
x\frac{y-x}{y'-x}+y-\frac{y-x}{y'-x}y'=(x-y')\frac{y-x}{y'-x}+y=x
$$
While the weight of edge $e'$ is now
$$
y'\frac{y-x}{y'-x}+y-\frac{y-x}{y'-x}y'=y
$$

As all edges are the same weight as before, therefore we reached a contradiction regarding the inclusion of $e'$ instead of $e$ (whose weights are the same as well).

We will now show shortest path follows our axioms:
\begin{description}
\item[Robustness (axiom 1)] Removing an edge, at most, eliminates a potential path from a node to the destination $d$. If the path was not on the shortest path, the previous shortest path remains so.
\item [Scale invariance (axiom 2)] Multiplying by a fixed amount all edges means the value of each path is multiplied by the same amount, maintaining their relative ordering, hence what was shortest remains so.
\item [Monotonicity (axiom 4)] Giving an edge the value of the sum of all other edges ensures it will only be selected if no other path can replace it --- and if there exists a tree without some edge, we know there is such a path.
\item [Path cardinal invariance (axiom 6)] Having multiple paths from a node, adding the same amount to each path doesn't change the ordering of the paths (i.e., which path is ``shorter'' than another), hence selection of shortest path will be identical.
\end{description}
\end{proof}

\section{Weakest Link}

\begin{theorem}
A robust, scale invariant, shift invariant, inverse monotone, and path ordinal invariant (axioms 1-4, 7) routing function $f$, for any graph $G(V,E,W)$ and $d\in V$, $f_{d}(G)$ will always be a weakest link graph to $d$ of $G$.
\end{theorem}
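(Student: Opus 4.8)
The plan is to prove the statement as the exact ``dual'' of the minimal-spanning-tree theorem: since here larger weights are better, a weakest-link (widest-path) tree is precisely a \emph{maximum} spanning tree, and the argument mirrors the shortest-path proof and its single-cycle machinery, except that \emph{path ordinal invariance} replaces path cardinal invariance, \emph{inverse monotonicity} replaces monotonicity, and scale and shift invariance are now used \emph{together} to realise the affine normalisation. Both invariances are needed because path ordinal invariance is purely ordinal: unlike path cardinal invariance it cannot add a fixed amount to a chosen pair of edges, so the additive part of the normalisation must come from the \emph{global} shift axiom while the multiplicative part comes from scale invariance. I would argue by contradiction that $f_d(G)$ is a widest-path tree.

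First I would fix the violating structure. Assume $T=f_d(G)$ is not a widest-path tree. Then, as in the shortest-path proof, there is a node $a$ and an edge $e=(a,v)\notin T$ such that routing $a$ through $v$ gives it a strictly wider path to $d$ while $v$ is itself already connected to $d$ by a widest path. Applying robustness to $G'=T\cup\{e\}$ yields a graph with a single cycle and $f_d(G')=f_d(G)=T$, so $f$ here selects the narrower of the two $a$-to-$d$ paths; call the selected (tree) path $p_2$ and the wider alternative $p_1$. Since $p_1$ is strictly wider, the global minimum of $p_1\cup p_2$ is attained at $p_2$'s bottleneck (smallest) edge $e_{\min}$ with value $z$; I also name the global maximum edge $e_{\max}$ of $p_1\cup p_2$ with value $M$, so that $z<M$ and every other edge of $p_1\cup p_2$ lies in $[z,M]$.

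The core of the argument produces two graphs that agree on their extremal edges yet are routed differently. Using inverse monotonicity I lower $e_{\min}$ to a value $z'<z$, so that $p_2$ becomes even narrower and $f$ must switch to $p_1$; call this graph $G''$. I then apply the affine map $t\mapsto\alpha t+\beta$ with $\alpha=\frac{M-z}{M-z'}\in(0,1)$ and $\beta=M(1-\alpha)$, which sends $z'\mapsto z$ and fixes $M$; by scale invariance (note $\alpha>0$) and shift invariance the resulting graph $G'''$ is routed exactly as $G''$, i.e.\ along $p_1$. Because an increasing affine map preserves the order of edge weights and keeps all weights inside $[z,M]$, the graphs $G'''$ and $G'$ carry identical values on the two extremal edges $e_{\min}$ and $e_{\max}$ and differ only on \emph{non-extremal} edges whose ordinal rank inside each path is unchanged. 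Transporting those non-extremal edges from their $G'''$ values back to their $G'$ values one at a time---in order of weight, so that each single change keeps its edge strictly between $e_{\min}$ and $e_{\max}$ and preserves the required inequalities---each step is a legal instance of path ordinal invariance and leaves the selected path fixed. Hence $f_d(G')=f_d(G''')$ selects $p_1$, contradicting that $f_d(G')$ selects $p_2$.

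Finally I would verify that the widest-path rule itself satisfies the five axioms: robustness (removing an edge outside a node's path cannot raise the bottleneck of any competing path, so untouched paths stay widest), scale and shift invariance (increasing affine maps preserve the order of path bottlenecks), inverse monotonicity (lowering a chosen tree edge below the next path's bottleneck forces a reroute), and path ordinal invariance (only the per-path minima determine the choice, and these are preserved by the allowed ordinal moves). The main obstacle I anticipate is the bookkeeping in the path-ordinal transport and the degenerate cases it must survive: guaranteeing that each single-edge change genuinely meets the ``non-maximal and non-minimal'' precondition and preserves the rank inequalities throughout the sequence (which forces a careful ordering of the changes), and handling weight ties---where the bottleneck of $p_2$ is not unique, so a single application of inverse monotonicity need not flip the selection and the extremal edges need not be unique---together with the length-two-cycle (parallel-edge) case and the verification that the chosen node $a$ indeed admits a strictly wider, already-correct alternative path.
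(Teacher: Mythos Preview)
Your overall strategy matches the paper's: reduce via robustness to a single-cycle graph $T\cup\{e\}$, use inverse monotonicity to force the routing to switch from $p_2$ to $p_1$, then use an affine renormalisation (scale followed by shift) together with path ordinal invariance to recover a graph identical on the relevant edges but routed the opposite way, yielding a contradiction. The verification that weakest link satisfies the five axioms is also the same.

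The one substantive difference is \emph{where} you deploy path ordinal invariance. The paper applies it \emph{first}: it repeatedly lowers the second-smallest edge of each path until every edge on $p_1$ equals the bottleneck $x$ and every edge on $p_2$ equals the bottleneck $y$. After this flattening there are only two distinct values on the paths, so the affine map (two degrees of freedom) can send $(x,y')\mapsto(x,y)$ and restore the flattened graph exactly, with no residual edge-by-edge cleanup. You instead apply the affine map first, fixing only $e_{\min}$ and $e_{\max}$, and then invoke ordinal invariance \emph{afterwards} to transport each non-extremal edge from its value $\alpha w+\beta$ back to $w$ one at a time. This is sound---your ordering argument (processing by increasing target value, keeping every moved edge strictly between $z$ and $M$, and checking the ``not max/min in $p_1\cup p_2$'' precondition at each step) does go through---but it is considerably more delicate, and as you yourself note, the degenerate cases (ties at the extremes, shared edges on $p_1\cap p_2$) require extra care that the paper's flatten-first approach sidesteps entirely. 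The paper's route buys simplicity; yours buys nothing extra, so you may prefer to adopt the flattening trick.
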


\begin{proof}
Suppose $T=f_{d}(G)$ is not a weakest link routing tree. Let $u$ be a node that requires just one edge missing from $T$ that is not connected to $d$ with a weakest link\footnote{such a node exists as there is a node not connected by weakest link in $T$, hence adding the necessary path for that node, taking the node just before the final edge that we add to $T$ (i.e., closest to $d$), answers our criterion.}, and we mark this edge as $e=(u,v)$. Since $e\notin T$, there is an edge instead $e'=(u,s)$ that is included in $T$. Using robustness, we create $G'(V,E',W)=T\cup e$. $G'$ contains two alternate paths from $u$ to $d$, and $f_{d}(G)=f_{d}(G')$.

Using path ordinal invariant, we change the value of all edges on each alternate path from $u$ to $d$ to its weakest link value (we do this by taking the 2nd smallest edge in the path and changing its value to that of the weakest link, which by the axiom does not change the path chosen, and we proceed doing so to all edges on the path). We shall refer to $W(e)=x$ and $W(e')=y$ (from assuming $u$ is not in a weakest link path we know $x>y$). Using inverse monotonicity, we create $W'$ identical to $W$ except for $e'$ weight, that is low enough that it is not included in $f_{d}(G''(V,E',W'))$. Once again, we change the values of the paths from $u$ to $d$ to their weakest link value (this is only relevant for the path through $e'$, as the other path has not changed). We term the the new value for $e'$ -- $y'$, and we know $x>y'$.

Finally, we multiply all edges by $\frac{y-x}{y'-x}$ (using scale invariance), and using shift invariance, we add to all edges  $y-\frac{y-x}{y'-x}y'$. Edge $e$ now has the weight:
$$
x\frac{y-x}{y'-x}+y-\frac{y-x}{y'-x}y'=(x-y')\frac{y-x}{y'-x}+y=x
$$
While the weight of edge $e'$ is now
$$
y'\frac{y-x}{y'-x}+y-\frac{y-x}{y'-x}y'=y
$$

As all edges are the same weight as before, hence we reached a contradiction regarding the inclusion of $e'$ instead of $e$.

We shall now show weakest link also follows our axioms:
\begin{description}
\item[Robustness (axiom 1)] Removing an edge, at most, eliminates a potential path from a node to the destination $d$. If the path was not a weakest link, the previous weakest link remains so.
\item [Scale invariance (axiom 2)] Multiplying by a fixed amount all edges means the value of each path (its smallest edge) is multiplied by the same amount, maintaining their relative ordering, hence what was weakest link remains so.
\item [Shift invariance (axiom 3)] Adding a fixed amount all edges means the value of each path (its smallest edge) is added the same amount, maintaining their relative ordering, hence what was weakest link remains so.
\item [Monotonicity (axiom 4)] Giving an edge the value of the minimum of all other edges ensures it will only be selected if no other path can replace it --- and if there exists a tree without some edge, we know there is such a path.
\item [Path ordinal invariance (axiom 7)] Having multiple paths from a node, the weakest link edge (the one with smallest value) of the selected path can't become lower than the weakest link of the non-selected path, hence weakest link choice does not change.
\end{description}
\end{proof}

\section{Tightness of Axioms}
We will now show that the above characterizations are tight, and that without each axiom, other routing algorithms become possible.
  
\begin{theorem}
All MST axioms (1-5) are necessary, and without even one of them, other routing algorithms are possible.
\end{theorem}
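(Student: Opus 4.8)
The plan is to prove each of the five axioms \emph{necessary} by establishing independence: for every $i\in\{1,\dots,5\}$ I will construct a routing function $g_i$ that satisfies all of the MST axioms \emph{except} axiom $i$, yet outputs a tree that is not a minimal spanning tree on at least one graph. Combined with the second half of the proof of Theorem~\ref{thm:MST} (which already shows the minimal spanning tree rule satisfies all five axioms), the five witnesses show that dropping any single axiom enlarges the admissible set beyond MST, so none is redundant. All the real work lies in constructing the $g_i$ and re-verifying, for each of them, the four retained axioms, in the same per-axiom style used for MST, shortest path and weakest link.

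The key structural observation I would exploit is that MST depends only on the \emph{order} of the edge weights, which is precisely why it is simultaneously scale- and shift-invariant; hence witnesses for the scale slot ($i=2$) and the shift slot ($i=3$) cannot be order-based and must use the cardinal values. To keep shift invariance while breaking scale invariance I would compare weight \emph{differences} against an absolute threshold (a difference is shift-invariant but not scale-invariant); to keep scale invariance while breaking shift invariance I would use a \emph{sum}-based, shortest-path-flavoured selection (the ordering of sums is scale-invariant but shift-sensitive). The monotonicity slot ($i=4$) calls for an \emph{inverse}-monotone, weakest-link/maximum-weight-flavoured rule, and the first-hop slot ($i=5$) calls for a globally coherent rule in which a node's first edge depends on structure that is not incident to it. The robustness slot ($i=1$) admits the cleanest witness, a local perturbation of MST: fix a base graph $G_0$, output a chosen non-minimal spanning tree on the whole scale--shift orbit $\{\alpha W_0+\beta : \alpha>0,\ \beta\in\mathbb{R}\}$ of $G_0$, and MST everywhere else; deleting an edge leaves the orbit and falls back to MST, which breaks robustness, while scale and shift invariance hold by the equivariant definition.

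The recurring difficulty, and the step I expect to be the main obstacle, is that robustness and first hop are the two structurally strongest axioms: robustness forces coherence across all edge-deletions, first hop forces each node's first edge to depend only on its incident candidate weights, and the proof of Theorem~\ref{thm:MST} leverages exactly these two, together with the scaling and shifting freedom, to force the first edge to be the lightest. In every slot except robustness I must retain \emph{both} of these axioms while destroying only the target axiom, which leaves very little room to deviate from MST. The delicate verifications are therefore (a) exhibiting, for slots $2$, $3$ and $5$, a rule that is robust \emph{and} first-hop-respecting yet non-MST with exactly the intended invariance removed, and (b) checking the first-hop axiom on the patched orbit in the robustness slot, where the first-edge choice must be shown consistent across the entire family of incident-weight-preserving reweightings. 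I would attack (a) by building each rule from an explicitly specified first-edge preference (securing first hop) embedded in a Kruskal-like, deletion-stable procedure (securing robustness), and then argue monotonicity and the surviving invariance by direct inspection; the bookkeeping required to confirm that \emph{exactly} one axiom fails is where the argument is most error-prone.
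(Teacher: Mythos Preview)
Your overall architecture is the paper's: exhibit, for each axiom $i\in\{1,\dots,5\}$, a routing rule $g_i$ that satisfies the other four and is not MST on some graph. Your choices for $i=1$ (patch MST on the scale--shift orbit of a fixed exemplar), $i=4$ (a maximum spanning tree), and $i=5$ (a weakest-link-type rule whose first edge depends on non-incident structure) are exactly the paper's witnesses.

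The substantive divergence is in slots $i=2$ and $i=3$. You propose to design \emph{global} rules from scratch: a difference-threshold comparison (shift-invariant, not scale-invariant) for slot~2, and a sum-based, shortest-path-flavoured selection (scale-invariant, not shift-invariant) for slot~3, and you correctly flag that preserving first hop under robustness is the hard part. The paper sidesteps this entirely: it reuses the same \emph{local patch} idea you already use for slot~1. For each of slots~2 and~3 it fixes a small concrete graph on which the patched output is not MST, closes that exemplar under the invariances that must be \emph{kept}, outputs the hand-chosen tree there, and outputs MST everywhere else; the target invariance fails only because the exemplar is reached from an ordinary MST instance by the forbidden transformation. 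This is strictly simpler than your plan: first hop and robustness are inherited from MST off the patch and are checked by hand on one tiny family, rather than having to be engineered into a new global rule.

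One concrete risk in your slot~3 idea: a genuinely ``sum-based'' first edge cannot depend only on the incident weights $W(v,c_i)$, so any shortest-path-flavoured rule will tend to violate first hop precisely where you need it. Your fallback (specify a first-edge preference directly and embed it in a Kruskal-like procedure) is viable, but at that point you are essentially rebuilding the paper's patch construction in disguise; it is cleaner to adopt the patch approach outright for slots~2 and~3.
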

\begin {proof}
Going over all MST axioms, we detail potential algorithms which work with all axioms except that one, and are not MST. We will refer below to each relaxed axiom, and to the new/additional system which can obtained by that relaxation:
\begin{description}
\item[Robustness] See example in Figure~\ref{robustMST}. Apply MST to any other graph that is not a linear transformation of the bottom one.
\item[Scale invariance] See example in Figure~\ref{multMST}. On all graphs except those which contain as a subgraph a linear transformations of the bottom one, apply MST.
\item[Shift invariance] See example in Figure~\ref{addMST}. On all graphs except those which contain as a subgraph a linear transformations of the bottom one, apply MST.
\item[Monotonicity] A maximal spanning tree implements all axioms but monotonicity.
\item[First Hop] Weakest link implements all of the other axioms.
\end{description}
\end{proof}

\begin{figure}
\begin{center}
\includegraphics[scale=0.6]{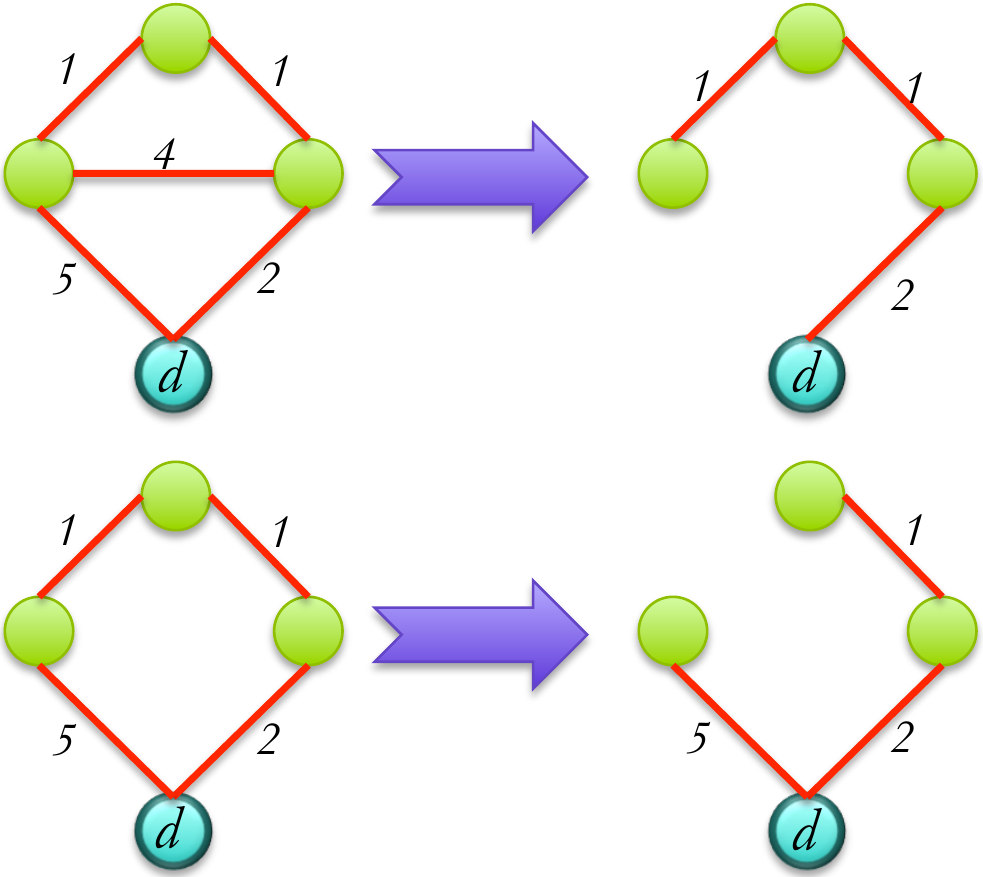}
 \end{center}
\caption {Lack of robustness results in a minimal spanning tree/shortest path routing (above) ending up in a routing tree that is weakest link but not MST or shortest path (below).}\label{robustMST}
\end{figure}
\begin{figure}
\begin{center}
\includegraphics[scale=0.6]{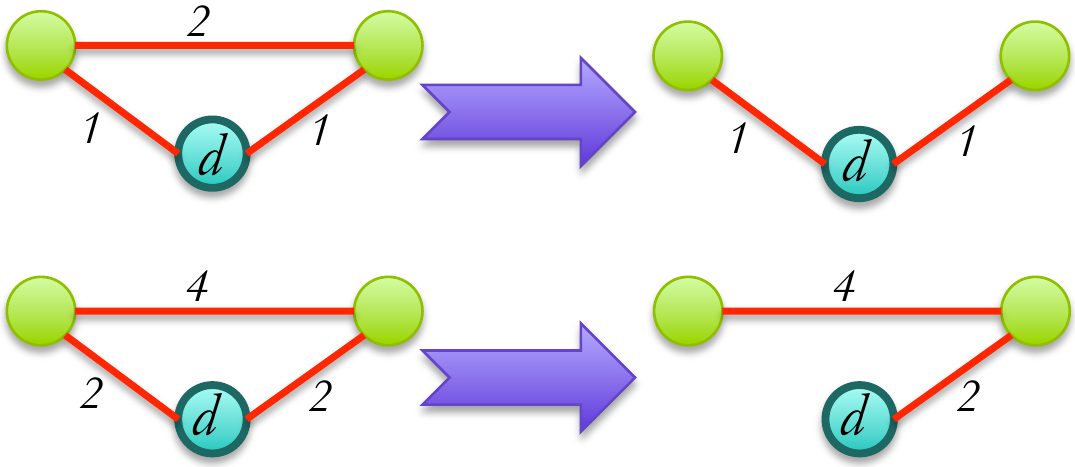}
 \end{center}
\caption {Eliminating scale invariance results in a minimal spanning tree/shortest path/weakest link routing (above) ending up in neither (below).}\label{multMST}
\end{figure}
\begin{figure}
\begin{center}
\includegraphics[scale=0.6]{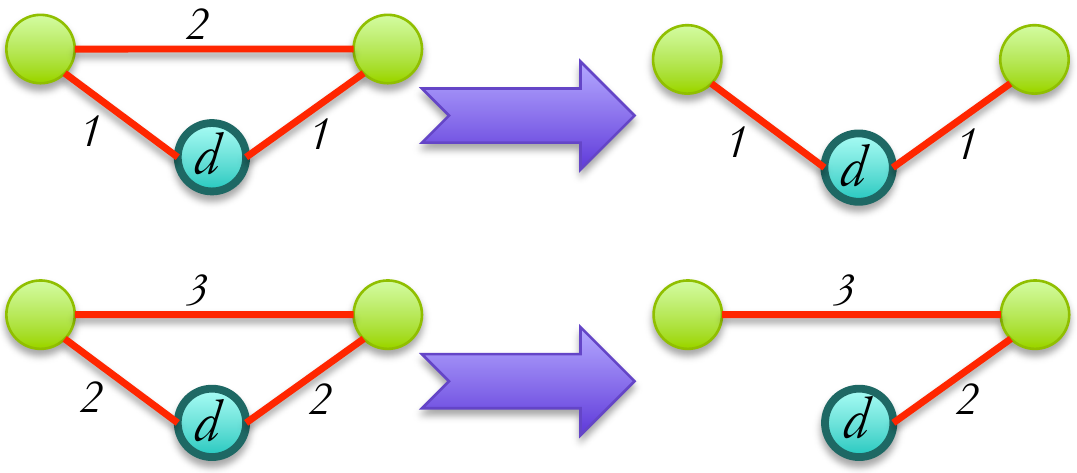}
 \end{center}
\caption {Eliminating shift invariance results in a minimal spanning tree/weakest link routing (above) ending up in neither (below).}\label{addMST}
\end{figure}

\begin{theorem}
All shortest path axioms (1-2, 4, 6) are necessary, and without even one of them, other routing algorithms are possible.
\end{theorem}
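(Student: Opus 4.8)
The plan is to mirror the structure of the preceding theorem on the tightness of the MST axioms: for each of the four shortest-path axioms I would exhibit a routing function that satisfies the remaining three axioms yet is not the shortest-path routing, so that dropping any single axiom enlarges the space of admissible algorithms. I would present this as a per-axiom case analysis, splitting the four axioms into two easy ``swap to another canonical algorithm'' cases (Monotonicity and Path Cardinal Invariance) and two more delicate ``modify shortest path on a restricted family'' cases (Robustness and Scale Invariance).

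For Monotonicity (axiom 4) I would use the \emph{longest-path} routing, the natural dual of shortest path in which each node is routed along the simple path to $d$ of maximal total weight. Since total path weight is an order-preserving, scale-respecting aggregate, longest path inherits robustness (deleting an edge off a node's chosen path only shrinks the set of available simple paths, leaving that path present and still maximal), scale invariance, and path cardinal invariance (adding $\alpha$ to one edge on each of two competing paths shifts both totals by $\alpha$ and preserves their order). It fails monotonicity precisely because inflating an unchosen edge lengthens its path and eventually forces its selection, and it differs from shortest path on essentially every nontrivial instance. For Path Cardinal Invariance (axiom 6) I would use the minimum spanning tree, which was already shown in the MST section to be robust, scale invariant, and monotone, and which differs from shortest path on, e.g., a triangle with weights $2,1,2$. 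MST violates axiom 6 because on a single-cycle graph it routes a node along whichever of the two paths avoids the cycle's heaviest edge; adding the same $\alpha$ to one edge on each path can change which edge is heaviest and hence flip the chosen route.

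For Robustness (axiom 1) and Scale Invariance (axiom 2) I would reuse the graphs behind Figures~\ref{robustMST} and~\ref{multMST}, whose captions already include the shortest-path routing. In each case the alternative function agrees with shortest path everywhere except on a distinguished graph (together with the instances the surviving axioms force along with it), where it returns a non-shortest-path tree. For robustness the distinguished instance is arranged so that deleting an edge lying off some node's path nonetheless changes that node's route, directly contradicting axiom 1, while the exceptional behaviour never propagates outside its family under scaling or cardinal shifts; for scale invariance the exceptional graph is chosen so that a rescaled copy is routed differently from the original, breaking axiom 2 alone. The verification in each case is that monotonicity (axiom 4) and path cardinal invariance (axiom 6) remain intact despite the override.

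The main obstacle, exactly as in the MST tightness argument, lies in these last two constructions rather than in the longest-path and MST cases. The difficulty is that monotonicity and path cardinal invariance tie the routing on the distinguished graph to the routing on graphs with perturbed edge weights, so the exceptional family must be chosen to be simultaneously closed under the transformation of the axiom being \emph{preserved} and ``isolated'' enough that neither a monotone weight increase nor a single-edge cardinal shift carries a deviating instance onto a graph where the function is supposed to return a genuine shortest path. Confirming that the families depicted in Figures~\ref{robustMST} and~\ref{multMST} meet both requirements is the delicate step; by contrast the longest-path and MST counterexamples are global algorithms whose three surviving properties follow from the same order-preservation observations used to prove the axioms in the positive direction.
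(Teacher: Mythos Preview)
Your proposal is correct and follows essentially the same approach as the paper: longest path for the monotonicity case, minimum spanning tree for path cardinal invariance, and overrides of shortest path on the families from Figures~\ref{robustMST} and~\ref{multMST} for robustness and scale invariance respectively. If anything, you supply more justification than the paper does---your explanations of why longest path and MST satisfy the remaining three axioms, and your identification of the closure/isolation requirement on the exceptional families, make explicit what the paper leaves largely to the reader.
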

\begin {proof}
Going over all shortest path axioms, we detail potential algorithms which work with all axioms except one, and are not shortest path. We will refer below to the each relaxed axiom, and to the new/additional system which can obtained by that relaxation:
\begin{description}
\item[Robustness] See example in Figure~\ref{robustMST}. Apply shortest path to any other graph that isn't a scale of the structure of the bottom one. Any edge in that structure that is 100 times all the others is removed in the tree.
\item[Scale invariance] See example in Figure~\ref{multMST}. Taking the bottom example and for the group that includes all graphs for which it is a subgraph and those that can be formed by path cardinal invariance, and only for them do not apply shortest path but rather the example (it won't trample on the top example, as if the upper example adds $y$ to lower-right edge, and $y$ to the rest, and the bottom example adds $x$, it would require $2+x=4+x$, reaching an impossibility).
\item[Inverse Monotonicity] A longest path tree implements all axioms but monotonicity.
\item[Path cardinal invariant] Minimal spanning tree implements all other axioms.
\end{description}
\end{proof}

\begin{theorem}
All weakest link axioms (1-4, 7) are necessary, and without even one of them, other routing algorithms are possible.
\end{theorem}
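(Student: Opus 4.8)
The plan is to follow the template of the two preceding tightness theorems: for each of the five axioms in the weakest-link characterization (robustness, scale invariance, shift invariance, inverse monotonicity, and path ordinal invariance) I would produce a routing function that satisfies the other four axioms, violates the one under consideration, and fails to be a weakest-link routing on at least one graph. Exhibiting all five witnesses shows that dropping any single axiom enlarges the set of admissible algorithms, which is exactly the claim.

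For the three ``global transformation'' axioms I would reuse the figure-based constructions already introduced. The example of Figure~\ref{multMST} already produces a tree that is neither MST, shortest path, nor weakest link, so taking $f$ to be weakest link on every graph except those that contain a scaling of that graph as a subgraph (where we instead output the displayed tree) violates only scale invariance. Figure~\ref{addMST} plays the same role for shift invariance, since its lower tree is again not a weakest-link tree and the only invariance it breaks is the additive one. For robustness I would build a small analogue of Figure~\ref{robustMST} in which the upper graph's weakest-link routing uses the removed edge on one branch but not another, and on the post-removal graph the routing is forced to a non-weakest-link tree; weakest link is applied on all graphs that are not linear transformations of this fixed structure, so every axiom except robustness survives.

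For inverse monotonicity I would use the natural reversal of weakest link, the \emph{minimum-bottleneck} routing, in which each node selects the path whose lightest edge is as small as possible. Because selection depends only on the single lightest edge of each path, scale and shift invariance and path ordinal invariance all hold (monotone and order-preserving changes to non-extremal edges leave the minima and their order untouched), and robustness holds by the same ``a deleted path option cannot improve a surviving choice'' argument used for weakest link. It is, however, the exact opposite of weakest link, so lowering an in-tree edge only makes its path more attractive, which is precisely a failure of inverse monotonicity; this mirrors the way the longest-path tree breaks inverse monotonicity in the shortest-path tightness proof.

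The main obstacle is the path ordinal invariance case, because the obvious named candidates are all excluded: MST violates inverse monotonicity (it prefers light edges), shortest path and the longest-path tree are not shift invariant, and the maximal spanning tree is ruled out because the unique tree-path between two nodes in a maximum spanning tree is always a maximum-bottleneck path, so maximal-spanning-tree routing is itself a valid weakest-link routing and is therefore not a counterexample. I would instead introduce a bespoke rule that prefers high weights but depends on more than the single lightest edge, for instance ``select the path maximizing the sum of its two lightest edges'' (or, equivalently for this purpose, maximizing the second-lightest edge). Such a rule is scale and shift invariant (both operations move the two lightest edges of every path by a common factor or amount and hence preserve the order of the sums) and inverse monotone (driving an in-tree edge low makes it one of the two lightest and collapses its path's value), yet it is sensitive to a non-extremal middle edge and so violates path ordinal invariance, and it disagrees with weakest link whenever two competing paths share the same lightest edge but differ in their second lightest. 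The crux of the argument is verifying that this rule genuinely satisfies robustness and the remaining invariances while producing a non-weakest-link tree on a concrete graph; confirming consistency (that the per-node optimal paths assemble into a single tree) is the part that requires the most care.
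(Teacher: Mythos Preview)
Your plan matches the paper's on four of the five axioms: for robustness, scale invariance, and shift invariance the paper also reuses Figures~\ref{robustMST}, \ref{multMST}, and \ref{addMST} essentially as you describe, and for (inverse) monotonicity it invokes a ``strongest link tree,'' which is the same reversal-of-weakest-link idea as your minimum-bottleneck routing.

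The one substantive divergence is the path-ordinal-invariance case. The paper's proof there is a single line: ``Minimal spanning tree implements all other axioms.'' You instead reject MST on the grounds that it fails inverse monotonicity, and your objection is sound: lowering the weight of a non-bridge edge only makes it \emph{more} attractive to Kruskal's algorithm, so driving an in-tree edge's weight down never removes it from the MST. Since inverse monotonicity is among the ``other axioms'' that must survive when only path ordinal invariance is dropped, the paper's chosen witness does not actually establish independence of this axiom, and your instinct that a tailor-made rule is needed is well founded rather than overcautious.

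That said, your own candidate (maximize the sum of the two lightest edges on a path, or the second-lightest edge) still carries the loose ends you flag. The rule is not obviously well defined on single-edge paths, and any ad~hoc convention there risks breaking shift invariance, since adding $\alpha$ to every edge shifts a one-edge ``sum'' by $\alpha$ but a genuine two-edge sum by $2\alpha$. More seriously, this objective has no evident optimal-substructure property, so there is no guarantee that the per-node optimal paths are mutually consistent and assemble into a single routing tree; that consistency check really is the crux, and it would have to be carried out in full (or the rule replaced by one with a cleaner subpath property) before the argument is complete.
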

\begin {proof}
Going over all weakest link axioms, we detail potential algorithms which work with all axioms except one, and are not weakest link. We will refer below to the each relaxed axiom, and to the new/additional system which can obtained by that relaxation:

\begin{description}
\item[Robustness] See example in Figure~\ref{robustMST}. Apply weakest link to any other graph that isn't of the structure of the bottom one, Any edge that in that structure that is 100 times less that all the others' weight is removed.
\item[Scale invariance] See example in Figure~\ref{multMST}. Taking the bottom example and for the group that includes all graphs for which it is a subgraph and those that can be formed by shift invariance and only for them do not apply weakest link but rather the example (it won't trample on the top example, as it can't be reached by shift invariance, and as the edge weights are all minimal/maximal, they change change by ordinal invariance).
\item[Shift invariance] See example in Figure~\ref{addMST}. Taking the bottom example and for the group that includes all graphs for which it is a subgraph and those that can be formed by scale invariance and only for them do not apply weakest link but rather the example (it won't trample on the top example, as it can't be reached by shift invariance, and as the edge weights are all minimal/maximal, they change change by ordinal invariance).
\item[Monotonicity] A strongest link tree implements all axioms but monotonicity.
\item[Path ordinal invariant] Minimal spanning tree implements all other axioms.
\end{description}
\end{proof}

\section{Discussion}
In this paper we explore the basic issue of routing -- how should information flow through a network and what properties might this process have. In the process of considering this issue we developed several properties we believe might be desirable by system planners. For example, \emph{robustness}, or the ability of a routing protocol to keep small changes from disrupting the whole routing process, is a property especially required in fast, changing networks.
 
Naturally, creating a structure from possible interactions between agents defined by a connections' graph is not limited just to information routing in networks such as the internet. Looking at organizations, where workers are connected according to their ability to work with other workers, and instead of routing messages between them we seek to construct an organizational hierarchy, we face a similar challenge. Again, robustness is a desirable property, as it means that if some workers have a worsening relationship with others, if they're not very senior in the organization, it has little effect on many others. In this case, we may consider the ``economic model'' (``first hop'' axiom) appropriate as well -- if workers only interact with their boss, we only care about the edge from each worker to his/her boss, and each worker does not care what happens further up in the hierarchy\footnote{Similarly, in a highly centralized organization, a path cardinal invariance is probably a sensible axiom.}.

Beyond setting up the axioms, we also examined common routing algorithms -- minimal spanning tree, shortest path and weakest link, and fully characterized them. Obviously, this is only the beginning of the road for this line of research -- further steps will entail developing more axioms and using them to characterize more algorithms, with the aim of giving a set of tools for system designers, allowing them to choose desirable properties which would dictate appropriate routing protocols.

\section*{Acknowledgments}
The authors thank Michael Schapira for his insightful discussions on this matter. Moshe Tennenholtz carried out this work while at Microsoft Research, Israel. Aviv Zohar is supported in part by the Israel Science Foundation (Grants 616/13 and 1773/13), and by the Israel Smart Grid (ISG) Consortium.

\bibliographystyle{eptcs}
\bibliography{general}
\end{document}